\documentclass[11pt]{article}

\addtolength{\textheight}{1in} \addtolength{\topmargin}{-.5in}
\addtolength{\textwidth}{1.25in}
\addtolength{\oddsidemargin}{-.75in}
\addtolength{\evensidemargin}{-.75in}

\usepackage{amssymb,amsthm,amsmath}
\usepackage[mathscr]{eucal}

\newcommand{\Irr}{{\rm Irr}}
\newcommand{\wt}{\mbox{wt}}


\usepackage{bm} 



\newcommand{\C}{\mbox{$\mathbb C$}}

\newcommand{\Z}{\mbox{$\mathbb Z$}}





\newcommand{\vspan}{{\rm span}}

\newcommand\inner[1]{\langle \, #1  \, \rangle}
\newcommand{\cart}[3]{\ensuremath{{#1}_{#2} \times \cdots \times {#1}_{#3}}}
\newcommand{\ocart}[3]{\ensuremath{{#1}_{#2} \otimes \cdots \otimes {#1}_{#3}}}

\newcommand{\pperp}{\ensuremath{{\rm \perp}\kern-.60em {\rm \perp} }}
\newcommand{\npperp}{\ensuremath{{\rm \perp}\kern-.60em {\rm \not\perp} }}
\newcommand\set[1]{\{1, \ldots, #1 \}}
\newcommand\norm[1]{\| #1 \|}




\newtheorem{theorem}{Theorem}[section]
\newtheorem{lemma}[theorem]{Lemma}
\newtheorem{proposition}[theorem]{Proposition}
\newtheorem{example}[theorem]{Example}

\begin{document}
\author{Jay H.~Beder
\\ Jeb F.~Willenbring
\\Department of Mathematical Sciences\\University of Wisconsin-Milwaukee\\
P.O. Box 413\\Milwaukee, WI 53201-0413\\
{\tt beder@uwm.edu, jw@uwm.edu}}

\title{Invariance of generalized wordlength patterns}
\date{}

\maketitle

\vspace{-1cm}

\begin{abstract}
The generalized wordlength pattern (GWLP) introduced by Xu and Wu (2001) for an
arbitrary fractional factorial design allows one to extend the use of the
minimum aberration criterion to such designs. Ai and Zhang (2004) defined the
$J$-characteristics of a design and showed that they uniquely determine the
design.  While both the GWLP and the $J$-characteristics require indexing the
levels of each factor by a cyclic group, we see that the definitions carry over
with appropriate changes if instead one uses an arbitrary abelian group. This
means that the original definitions rest on an arbitrary choice of group
structure. We show that the GWLP of a design is independent of this choice, but
that the $J$-characteristics are not.  We briefly discuss some implications of
these results.

\end{abstract}

{\footnotesize {\bf Key words.} Fractional factorial design; group character;
Hamming weight; multiset; orthogonal array}

{\footnotesize {\bf AMS(MOS) subject classification.}
Primary: 62K15; 
Secondary:
05B15, 
15A69, 
20C15, 
62K05 
}

\section{Introduction}\label{intro}
In a regular fractional factorial design $D$, the quantities
  \[ A_i(D) = \mbox{the number of defining words of length} \; i \]
contain useful information about the design. In
particular, the smallest index $i$ for which $A_i(D) > 0$ is the resolution of
the design. Moreover, one way of comparing two designs having $k$ factors and
equal resolution is to compare their \emph{wordlength patterns} $(A_1, A_2,
\ldots, A_k)$ \cite{Franklin84,FriesHunter80}. The better design is said to
have less \emph{aberration}.

While nonregular designs no longer have defining words as such, a
\emph{generalized wordlength pattern (GWLP)} can be defined for them
combinatorially.  This was done for two-level designs by Tang and Deng
\cite{TangDeng99}, and was generalized to arbitrary (possibly mixed-level)
designs by Xu and Wu \cite{XuWu01} using group characters.

An intermediate computation in the two-level case gives a set of values that
Tang and Deng called $J$-characteristics (first introduced in
\cite{DengTang99}), and Tang \cite{Tang01} showed that these numbers completely
determine the design $D$, somewhat analogous to the way that a defining
subgroup determines a regular design.  Ai and Zhang \cite{AiZhang04}
generalized this to arbitrary designs by looking closely at the corresponding
computation in \cite{XuWu01}.

In defining generalized wordlength patterns of arbitrary designs, Xu and Wu
\cite{XuWu01} assigned to the $i$th factor the cyclic group $\Z_{s_i}$, where
$s_i$ = the number of levels of the factor.  While this choice is a
computational convenience, it is also arbitrary, and in fact the calculation of
the GWLP can be carried through using other abelian groups as well, as we
indicate below.

This, however, raises the following question for non-prime $s_i$. Since the
(irreducible) characters of two groups of equal order will generally be
different, does the choice of group affect either the $J$-characteristics or
the GWLP of a given design? Certainly any dependence of the GWLP on an
arbitrary choice would raise a serious question about its use in comparing
designs using relative aberration. It will be clearly seen that the
$J$-characteristics do depend on this choice. However, perhaps surprisingly,
this does not affect the values of the GWLP. That is our main result.

There are many excellent expositions of character theory, such as
\cite{Isaacs76}, \cite{Ledermann87} and~\cite{Serre77}. In general we will
mention known results without citation.  We will also use a number of results
from multilinear algebra (the theory of tensor products).  These are collected
in an appendix.  \vspace{1cm}

\textbf{Notation}. We will denote the integers by \Z, and the integers modulo
$s$ by $\Z_s$ as above.  The complex numbers will be denoted by \C\ and complex
Euclidean space by $\C^s$. Vectors in $\C^s$ will be viewed as columns. The
conjugate of $z \in \C$ will be denoted by $\bar{z}$, the transpose of a vector
or matrix by a prime ($'$), and the adjoint (or conjugate transpose) of a
matrix or linear transformation $A$ by $A^*$. The inner product of $v = [v_1,
\ldots, v_s]'$ and $w = [w_1, \ldots, w_s]' \in \C^s$ is given by
  \begin{equation} \label{innerprod-eq} \inner{v,w} = \sum_{i=1}^s v_i
  \overline{w}_i. \end{equation}
The cardinality of a set $E$ will be written $|E|$.

The \emph{Hamming weight} of $u = (u_1, \ldots, u_k)$, $\wt(u)$, is the number
of nonzero components of $u$. (In Section~\ref{definitions} we will replace
``nonzero" by ``nonidentity" in order to deal with groups whose identity
element is not 0.)

We alert the reader to the fact that we will use $G$ (or $G_i$) as an index
set, with elements $g$ or $h$.  Sometimes such sets will be groups, but often
they will be viewed just as sets.  We will try to make absolutely clear from
context when a result requires a group structure and when it doesn't.

\section{Definitions}\label{definitions}
A \emph{fractional factorial design} on $k$ factors is a multisubset $D$ of a
finite Cartesian product $G = \cart{G}{1}{k}$, that is, the set
$G$ with the element $g$ repeated $O(g)$ times, $O(g) \ge 0$.
The set $G_i$ indexes the $s_i$ levels of factor $i$, and we let $s = s_1 \cdots s_k$. We will refer to $O$ as the \emph{counting} or \emph{multiplicity function} of $D$. The elements ($k$-tuples) of the design are referred to as \emph{runs}, and the number of runs in the design, counting multiplicities, is
  \begin{equation} \label{|D|-eq}
  N = |D| = \sum_{g \in G} O(g).
  \end{equation}
The design $D$ may also be viewed as an \emph{orthogonal array}, particularly
if its runs are displayed in matrix form, say as columns of a $k \times N$
matrix.

In \cite{XuWu01} Xu and Wu defined the generalized wordlength pattern $(A_1(D),
\ldots, A_k(D))$ of $D$ as follows. If $G_i$ has $s_i$ elements, we take $G_i =
\Z_{s_i}$, the additive group of integers modulo $s_i$. This makes $G$ an
abelian group. To each $g \in \Z_s$ we associate a function $\chi_g: \Z_s \to
\C$ such that
\begin{equation} \label{cyclic_char-eq}\chi_g(h) = \xi^{gh}, \end{equation}
where $\xi$ is a primitive $s$th root of unity (say $\xi = e^{2\pi i/s}$).  For
elements $g = (g_1, \ldots, g_k)$ and $h = (h_1, \ldots, h_k)$ of $G =
\cart{G}{1}{k}$, we let
  \begin{equation} \label{char_prod-eq} \chi_{g}(h) = \prod_i \chi_{g_i}(h_i),
  \end{equation}
and define\footnote{In \cite{XuWu01} $\chi_g(D)$ is defined as $\sum_{h \in D} \chi_g(h)$, and it is to be understood that the $h$th term is repeated the number of times $h$ appears in the design \cite{Xu_personal}.  Equation (\ref{chi_D-eq}), which is essentially the same as that used in \cite{AiZhang04}, makes this explicit.}
  \begin{equation} \label{chi_D-eq} \chi_{g}(D) = \sum_{h \in G}
  O(h)\chi_{g}(h).
  \end{equation}
Finally, the ``generalized wordlengths" are given by
  \begin{equation} \label{genwordlength}
  A_j(D) = N^{-2} \sum_{\wt(g) = j} |\chi_{g}(D)|^2 \qquad \mbox{for} \; j =  1, \ldots, k, \end{equation}
where $\wt(g)$ is the Hamming weight of $g$.

Ai and Zhang \cite{AiZhang04} note that when $s_1 = \cdots = s_k = 2$ the
quantities $\chi_{g}(D)$ are the \emph{$J$-characteristics} of Tang and Deng,
and rename them so in the general case, with the notation $J_{g}(D)$.

We now indicate the way in which other groups may be used in (\ref{chi_D-eq}) and (\ref{genwordlength}).

The functions $\chi_{g_i}$ are the \emph{irreducible characters}
of the group $\Z_{s_i}$, and so the functions $\chi_{g}$ are the irreducible
characters of $G$.  Among these is $\chi_{e} \equiv 1$, the \emph{trivial
character} of $G$, corresponding to the identity $e$ of $G$.  Something similar holds for abelian groups, in particular the indexing of irreducible characters by group elements.

Specifically, the irreducible characters of an abelian group $G$ are precisely
the homomorphisms of $G$ into the multiplicative group $\C^* = \C \setminus
{0}$. The indexing of these characters is based on the following result.
\begin{theorem}  \label{IrrG-thm}
Let $\Irr(G)$ denote the set of irreducible characters of the group $G$. If $G$
is abelian, then $\Irr(G)$ forms a group under pointwise multiplication, and if
$G$ is also finite, then $G\cong \Irr(G)$. In particular, the identity element
of $G$ corresponds to the trivial character of $G$.
\end{theorem}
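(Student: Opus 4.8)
The plan is to reduce the statement to three observations: (i) for abelian $G$ the irreducible characters coincide with the group homomorphisms $G \to \C^*$; (ii) the set of such homomorphisms is visibly a group under pointwise multiplication; and (iii) when $G$ is finite this group is isomorphic to $G$. Only the last point has any real content.

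For (i), I would recall that the group algebra $\C[G]$ is commutative when $G$ is abelian, so by Schur's lemma every irreducible complex representation of $G$ is one-dimensional. A one-dimensional representation is the same datum as a homomorphism $\rho\colon G \to GL_1(\C) = \C^*$, and its character is $\rho$ itself; hence $\Irr(G) = \operatorname{Hom}(G,\C^*)$. (When $G$ is infinite abelian one simply takes this identity as the meaning of $\Irr(G)$; the argument for the group structure uses nothing more.) For (ii), if $\chi,\psi \in \operatorname{Hom}(G,\C^*)$ then $(\chi\psi)(gh) = \chi(g)\chi(h)\psi(g)\psi(h) = (\chi\psi)(g)(\chi\psi)(h)$ since $\C^*$ is abelian, so $\chi\psi \in \Irr(G)$; the constant function $1$ is the trivial character and is an identity element; and $g \mapsto \chi(g)^{-1}$ is again a homomorphism, giving inverses. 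Associativity is inherited from $\C$. Thus $\Irr(G)$ is an abelian group whose identity is exactly $\chi_e \equiv 1$, which settles the final sentence once the isomorphism of (iii) is established.

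For (iii) I would invoke the fundamental theorem of finite abelian groups to write $G \cong \Z_{n_1} \times \cdots \times \Z_{n_r}$, and then use two routine lemmas. First, $\Irr(A \times B) \cong \Irr(A) \times \Irr(B)$ via $(\chi,\psi) \mapsto \bigl((a,b) \mapsto \chi(a)\psi(b)\bigr)$, whose inverse is restriction to the two factors; one checks directly that this is a homomorphism and a bijection. Second, $\Irr(\Z_n) \cong \Z_n$: a character of $\Z_n$ is determined by the image of the generator $1$, which must be an $n$th root of unity, and conversely every $n$th root of unity determines a character, so ``evaluate at $1$'' is an isomorphism of $\Irr(\Z_n)$ onto the cyclic group $\mu_n$ of $n$th roots of unity, and $\mu_n \cong \Z_n$. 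Combining, $\Irr(G) \cong \prod_j \Irr(\Z_{n_j}) \cong \prod_j \Z_{n_j} \cong G$. Since any group isomorphism sends identity to identity, under $G \cong \Irr(G)$ the element $e$ corresponds to $\chi_e \equiv 1$; for cyclic factors this is exactly the indexing in \eqref{cyclic_char-eq}, because $\chi_0(h) = \xi^{0} = 1$.

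The one point to be careful about is that the isomorphism $G \cong \Irr(G)$ is non-canonical: it depends on the chosen decomposition into cyclic factors, so one should not try to exhibit a ``natural'' map — the canonical statement is the double-duality $G \cong \Irr(\Irr(G))$, which is not needed here. With that understood, the proof is just an unwinding of definitions together with the structure theorem.
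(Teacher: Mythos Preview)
Your argument is correct and is the standard one. Note, however, that the paper does not actually supply a proof of this theorem: it is stated as a known result, with a reference to \cite[Theorem~2.4]{Ledermann87}, and the surrounding text merely remarks that the isomorphism is non-canonical because it depends on the decomposition into cyclic factors and on the choice of primitive root in~(\ref{cyclic_char-eq}). Your proof is exactly the argument underlying that citation --- structure theorem plus $\Irr(A\times B)\cong\Irr(A)\times\Irr(B)$ plus $\Irr(\Z_n)\cong\mu_n\cong\Z_n$ --- and your closing caveat about non-canonicity matches the paper's own commentary verbatim in spirit. There is nothing to correct.
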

The isomorphism is not canonical -- and, in particular, not unique -- as it
depends on the representation of an abelian group as a product of cyclic groups
(the Fundamental Theorem of Abelian Groups), and for cyclic groups on the
choice of root of unity in (\ref{cyclic_char-eq}). (See, e.g., \cite[Theorem
2.4]{Ledermann87}).
We will assume that we have fixed an isomorphism $G_i \to \Irr(G_i)$ for each
$i$, and thus an indexing of the irreducible characters of $G_i$ by group
elements.  We will not need to know the indexing explicitly.  The irreducible
characters of the direct product $G$ are still given by (\ref{char_prod-eq}).

$J$-characteristics and generalized wordlength counts are still defined by
(\ref{chi_D-eq}) and (\ref{genwordlength}), respectively, where we now define
the \emph{weight} of the element $g = (g_1, \ldots, g_k) \in G$ to be the
number of \emph{nonidentity} components of $g$. Our main result is this:

\begin{theorem} \label{indep-thm} The quantities $A_j(D)$ in (\ref{genwordlength}) are
independent of the group structure of $G$. \end{theorem}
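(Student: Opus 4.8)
The plan is to show that for each fixed weight $j$, the sum $\sum_{\wt(g)=j} |\chi_g(D)|^2$ is an intrinsic quantity not depending on how we identify each $G_i$ with $\Irr(G_i)$. Since $N$ is obviously independent of the group structure, this suffices. The key observation is that changing the group structure on $G_i$ (or changing the fixed isomorphism $G_i \to \Irr(G_i)$) amounts to \emph{permuting} the set of irreducible characters of $G_i$, but in a way that always fixes the trivial character $\chi_e$. Consequently the collection of characters of weight $j$ — those products $\chi_{g_1} \otimes \cdots \otimes \chi_{g_k}$ in which exactly $j$ of the factors are nontrivial — is permuted among itself. So the \emph{multiset} of values $\{|\chi_g(D)|^2 : \wt(g) = j\}$ is unchanged, only reindexed, and its sum is therefore invariant. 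This immediately gives that the $A_j(D)$ are invariant while making transparent why the individual $J$-characteristics $J_g(D)$ are not.

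To carry this out rigorously I would first record the precise statement being used: fixing group structures amounts to fixing, for each $i$, a bijection $G_i \to \Irr(G_i)$ carrying the identity to the trivial character (Theorem~\ref{IrrG-thm}); two different choices differ by a bijection of $\Irr(G_i)$ fixing $\chi_e$. Next I would define, independently of any group structure, the set
\[
\mathcal{W}_j = \Set{\psi_1 \otimes \cdots \otimes \psi_k}{\psi_i \in \Irr(G_i),\ \#\{i : \psi_i \neq \text{trivial}\} = j},
\]
where here $G_i$ is regarded merely as a finite set of size $s_i$ and $\Irr(G_i)$ just means ``functions on $G_i$ of the appropriate form'' — i.e., the set of characters of \emph{some} abelian group structure; crucially, which functions are irreducible characters, and which one is trivial, does \emph{not} depend on the group structure (every abelian group of order $s_i$ has the all-ones function as a character, and the characters are exactly the homomorphisms, but as a \emph{set} of functions $\Irr(G_i)$ can vary). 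I would then note that $\chi_g(D) = \sum_{h} O(h)\, \chi_g(h)$ depends only on the \emph{function} $\chi_g$, not on which group element $g$ indexes it, so
\[
A_j(D) = N^{-2} \sum_{\psi \in \mathcal{W}_j} |\psi(D)|^2, \qquad \psi(D) := \sum_{h \in G} O(h)\, \psi(h),
\]
and the right-hand side has no reference to the group structure once we know $\mathcal{W}_j$ is the same set of functions.

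The remaining point — and the one genuine obstacle — is to verify that $\mathcal{W}_j$ really is the same set of functions $G \to \C$ regardless of the group structure chosen on each $G_i$. This is not automatic: different abelian groups of the same order have genuinely different character tables, so $\Irr(G_i)$ as a set of functions does change. What must be argued is that the \emph{union} $\mathcal{W} := \bigcup_j \mathcal{W}_j$ and the weight-grading are intrinsic. For this I would invoke the structure of characters of the direct product: every character of $G$ factors as a tensor product over the $G_i$ (equation~(\ref{char_prod-eq})), the trivial character of $G$ is the tensor product of trivial characters, and more generally the weight of $\psi = \psi_1 \otimes \cdots \otimes \psi_k$ counts the factors $\psi_i$ that are nontrivial. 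Here I would isolate the key sub-claim: \emph{for a single factor, the set of functions $\Set{\psi : G_i \to \C}{\psi \text{ is an irreducible character of some abelian group structure on } G_i}$ together with the distinguished trivial element is in fact fixed once we agree to use the full collection of characters.} Wait — this is the subtle spot. The honest resolution, which I would pursue, is that we are not varying over \emph{all} abelian structures simultaneously in a single $A_j$; rather, for each fixed choice of structures we get a complete orthogonal set of characters, and the statement to prove is that the number-theoretic quantity $A_j(D)$ coincides across choices. I would therefore argue by an orthogonality/averaging identity: express $\sum_{\wt(g)=j}|\chi_g(D)|^2$ using the orthogonality relations for the characters of $G$ as a sum over pairs $(h,h') \in G \times G$ of $O(h)O(h')$ times a kernel $K_j(h,h')$, and show that $K_j(h,h')$ depends only on the set $\{i : h_i \neq h_i'\}$ — i.e., on the ``Hamming pattern'' of the pair — and not on the group structure. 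Concretely, by orthogonality of irreducible characters of the finite abelian group $G_i$, $\sum_{\psi_i \in \Irr(G_i)} \psi_i(h_i)\overline{\psi_i(h_i')} = s_i \cdot [h_i = h_i']$, and the analogous sum over \emph{nontrivial} $\psi_i$ equals $s_i[h_i=h_i'] - 1$; expanding $\sum_{\wt(g)=j}$ via inclusion–exclusion over which coordinates are forced nontrivial then yields a closed form for $K_j(h,h')$ purely in terms of the $s_i$ and the indicator pattern $[h_i = h_i']_{i=1}^k$, with no residual dependence on the characters themselves. Summing $N^{-2}\sum_{h,h'} O(h)O(h') K_j(h,h')$ then gives a formula for $A_j(D)$ manifestly independent of the group structure, completing the proof. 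The main work is thus the inclusion–exclusion bookkeeping to obtain $K_j$; everything else is orthogonality and the indexing set-up from Theorem~\ref{IrrG-thm}.
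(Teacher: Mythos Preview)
Your final approach is correct and is genuinely different from the paper's. You rightly abandon the initial ``permutation'' idea: the set of irreducible characters of $G_i$, viewed as functions on the underlying set, really does change with the group structure (as the $\Z_4$ vs.\ $V$ example shows), so one cannot argue that $\mathcal{W}_j$ is the same set of functions. Your recovery via the kernel $K_j(h,h')=\sum_{\wt(g)=j}\chi_g(h)\overline{\chi_g(h')}$ is the right move. Partitioning by the support set $J$ (no inclusion--exclusion is actually needed; it is a direct decomposition) and using column orthogonality $\sum_{\psi\in\Irr(G_i)}\psi(h_i)\overline{\psi(h'_i)}=s_i[h_i=h'_i]$, which holds for \emph{any} abelian structure on $G_i$, gives
\[
K_j(h,h')=\sum_{|J|=j}\ \prod_{i\in J}\bigl(s_i[h_i=h'_i]-1\bigr),
\]
a function only of the agreement pattern $([h_i=h'_i])_{i=1}^k$ and the $s_i$. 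Summing $N^{-2}\sum_{h,h'}O(h)O(h')K_j(h,h')$ then exhibits $A_j(D)$ in a manifestly group-free form.

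By contrast, the paper works in $\C^s$ with its tensor-product structure: it writes the inner sum over $S_J$ as $s\norm{M_JUO}^2$ with $M_J=\bigotimes_i M_i$ a projection and $U$ the normalized character table, and proves invariance by induction on the number of factors $Q_i=I_i-P_i$ appearing in $M_J$, the base case being an explicit computation when only $I_i$'s and $P_i$'s occur (Lemma~\ref{q=0-lemma}). Your argument is more elementary---it needs only the second orthogonality relation and no multilinear machinery---and yields as a bonus an explicit closed formula for $A_j(D)$ in terms of the pairwise Hamming agreements of runs (essentially the MacWilliams/distance-distribution viewpoint hinted at in the paper's conclusion). The paper's route, on the other hand, isolates the finer invariants $\sum_{g\in S_J}|\chi_g(D)|^2$ for each fixed $J$ and connects them to marginal counts of the design, which is structurally informative in its own right. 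Both approaches ultimately rest on the single group-independent fact $HH^*=sI$.
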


The proof of this theorem is given in Section \ref{independence}. Before
considering this, we take a moment to study the effect of the choice of group
on the $J$-characteristics of a design.

\section{$J$-characteristics. The character table.} \label{J-char-sec}
We see that the irreducible characters of a finite abelian group $G$ of order
$s$ may be written $\chi_{g_1}, \ldots, \chi_{g_s}$, where $g_i$ are the
elements of $G$ in some order.  The values $\chi_{g_i}(g_j)$ form the
\emph{character table} of $G$, the columns of which are mutually orthogonal and
of norm $\sqrt{s}$ (with respect to the inner product (\ref{innerprod-eq})).
Another way to say this is that the $s \times s$ matrix $H$ formed by this
table has the property that $H^*H = HH^* = sI$, where $H^*$ is the adjoint of
$H$ ($H$ is thus a complex Hadamard matrix).

Let $G = \cart{G}{1}{k}$ where each $G_i$ is an abelian group, so that $G$ is
as well, and assume that the elements of $G$ are ordered in some fashion.  (Ai
and Zhang \cite{AiZhang04} use a lexicographic or \emph{Yates} order).)  If we
consider the set of $J$-characteristics $\chi_g(D)$ and the counts $O(g)$ as $s
\times 1$ vectors $\chi$ and $O$ indexed by $g \in G$, then (\ref{chi_D-eq})
may be written
   \begin{equation} \label{chi=HO-eq}  \chi = HO. \end{equation}
Multiplying through by $H^*$, we see that $H^*\chi = H^*HO = sO,$ so that \[O =
(1/s)H^*\chi,\] and in particular that the $J$-characteristics determine the
design.  This is Theorem 1 of \cite{AiZhang04}.

However, in general $H$ depends on the group structure of $G$, and so from
(\ref{chi=HO-eq}) or directly from ({\ref{chi_D-eq}) we see that \emph{the
values of the $J$-characteristics depend on the choice of group structure}.  This is illustrated with the following example.

\begin{example} \rm Consider the 3-factor design
\begin{equation}  \label{O4-eq}
D = \left[ \begin{array}{cccc}
    0000 & aaaa & bbbb & cccc \\
    0abc & 0abc & 0abc & 0abc \\
    0abc & b0ca & ac0b & cba0
\end{array}\right].
    \end{equation}
Each factor has 4 levels, namely $0, a, b$, and $c$, and each column is a
treatment combination. One can check that this is an orthogonal array of
strength 2 and index 1 (it is taken from \cite{DieBed02}, where it is shown to
be non-regular).

For each factor the symbol set $G_i = \{0,a,b,c\}$ may be given two group
structures, namely that of the cyclic group $\Z_4$ and that of the ``Klein
4-group" $V$ (isomorphic to $\Z_2 \times \Z_2$).  Table~\ref{O4-table} displays
the non-zero values of $\chi_g(D)$ as $g$ runs over the 64 elements of $G = G_1
\times G_2 \times G_3$, where the groups $G_i$ are all $\Z_4$ or all $V$.  For example,
$\chi_{aaa}(D) = -6-2i$ using $\Z_4$, but = 8 using $V$.  Thus we see that the
values of the $J$-characteristics depend on the group structure.   Note that for both group structures we have
$\chi_g(D) = 0$ if $\wt(g) = 1$ or 2.  Such group elements have been omitted
from the table for convenience.

It is not hard to calculate the values $A_j(D)$ given by equation
(\ref{genwordlength}), where we have $N = 16$.  (The computation is shortened
by the fact that $|\pm a \pm bi|^2 = a^2 + b^2$.)  We find that under both
group structures we have $A_1(D) = A_2(D) = 0$ and $A_3(D) = 3$, as guaranteed
by Theorem~\ref{indep-thm}.

\setlength{\tabcolsep}{1.3mm}
\begin{table}\parbox{15cm}{\caption{$J$-Characteristics for Design $D$ under two different group structures
 \newline For the design in
(\ref{O4-eq}), the value of $\chi_g(D)$ is given for each $g \in G$, where $G
=$ either $\Z_4 \times \Z_4 \times \Z_4$ or $V \times V \times V$. Those $g$
for which both values of $\chi_g(D)=0$ are omitted. (Computation was done in Maple.) }}\label{O4-table}
\begin{tabular}{ccccccccccc} \\ \hline
$g=$   &$000$  &$aaa$   &$aab$   &$aac$  &$aba$   &$abb$   &$abc$  &$aca$
&$acb$   &$acc$\\ \hline
$\Z_4$ & 16 &  $-6-2i$&    $4i$&  $6-2i$& $-4i$& $4+4i$&   $-4$&   $6-2i$& 4&  $6+2i$ \\
$V$ &  16& 8&  8&  0&  $-8$&  8&  0&   0&  0&  0\\ \\ \hline

$g=$    &$baa$   &$bab$   &$bac$  &$bba$   &$bbb$   &$bbc$  &$bca$ &$bcb$
&$bcc$\\ \hline
$\Z_4$ & $4i$& $-4-4i$& 4& $4+4i$&    8& $4-4i$&  4& $-4+4i$&   $-4i$\\
$V$ &   8& $-8$&  0&     8&  8&  0&    0&  0&  0\\ \\ \hline

$g=$       &$caa$   &$cab$   &$cac$
      &$cba$   &$cbb$   &$cbc$ &$cca$   &$ccb$   &$ccc$\\ \hline
$\Z_4$ &    $6-2i$&     4&  $6+2i$&
      $-4$& $4-4i$&   $4i$&    $6+2i$&   $-4i$& $-6+2i$\\
$V$ &   0&  0&  0&  0&  0&  0&  0&  0& 16 \\ \hline
\end{tabular}

\end{table}

\end{example}

Before we leave this topic, we develop the properties of the character table a
little further.

In enumerating the elements of a group $G$ one typically chooses $g_1 =$ the
identity.  With this convention, which we shall adopt, $\chi_{g_1}$ is the
trivial character of $G$, so that $\chi_{g_1}(h) = 1$ for all $h \in G$.  On
the other hand, since $G$ is abelian, $\chi_g(g_1) = 1$ for every $g \in G$,
and so we see that $H$ must have the form
  \[ H = \left(
           \begin{array}{cccc}
             1 & 1 & \cdots & 1 \\
             1 & * & \cdots & * \\
             \vdots & \vdots &  & \vdots \\
             1 & * & \cdots & * \\
           \end{array}
         \right).
  \]

The matrix $U =  (1/\sqrt{s})H$ is said to be the \emph{normalized character
table} of $G$.  We list its important properties here, which follow from the
preceding.
  \begin{lemma} \label{unitary-lemma} $U$ unitary ($U^*U = UU^* = I$),
  and in particular defines an isometry on $\C^s$ ($\inner{Uv,Uw} = \inner{v,w}$).
  If $e = [1, 0, \ldots, 0]'$ and $b = (1/\sqrt{s})[1, \ldots, 1]'$ then
  \[ Ue = b = U^*e. \]
\end{lemma}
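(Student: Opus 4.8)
The plan is to read everything off from two facts already in hand: the Hadamard/orthogonality property of the character table $H$, and the normalization of its distinguished row and column that comes from taking $g_1 = e$.

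First I would dispose of unitarity. By the discussion preceding the lemma, $H^*H = HH^* = sI$, so dividing by $s$ gives $U^*U = UU^* = I$ for $U = (1/\sqrt s)H$. For the isometry statement I would use the convention of (\ref{innerprod-eq}), under which $\inner{v,w} = w^*v$, and compute $\inner{Uv,Uw} = (Uw)^*(Uv) = w^*(U^*U)v = w^*v = \inner{v,w}$.

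Next I would identify the two vectors $Ue$ and $U^*e$. Now $He$ is the first column of $H$, i.e.\ the vector whose $g$-th entry is $\chi_g(g_1)$; since $g_1 = e$ and each irreducible character $\chi_g$ is a homomorphism $G \to \C^*$, we have $\chi_g(e) = 1$ for every $g$, so $He = [1,\ldots,1]' = \sqrt s\,b$ and hence $Ue = b$. Dually, $H^*e$ is the complex conjugate of the first row of $H$, whose $j$-th entry is $\chi_{g_1}(g_j)$; since $g_1 = e$ corresponds to the trivial character of $G$ by Theorem~\ref{IrrG-thm}, this row is identically $1$, its conjugate is again $[1,\ldots,1]'$, and therefore $U^*e = (1/\sqrt s)[1,\ldots,1]' = b$.

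I do not expect a genuine obstacle here; the only point needing care is bookkeeping — keeping straight that $He$ extracts a column of $H$ (the values of all characters at the identity) while $H^*e$ extracts a conjugated row (the values of the trivial character), and that it is precisely the two elementary facts ``a character sends $e$ to $1$'' and ``the trivial character is identically $1$'' that make both vectors equal to $\sqrt s\,b$. Everything invoked is contained in the Hadamard property of $H$ and in Theorem~\ref{IrrG-thm}.
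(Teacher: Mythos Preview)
Your proposal is correct and is exactly the argument the paper has in mind: the paper gives no separate proof of this lemma, stating only that its properties ``follow from the preceding,'' namely the Hadamard identity $H^*H = HH^* = sI$ and the displayed form of $H$ with first row and first column identically $1$. You have simply made that implication explicit, and your bookkeeping about which of $He$ and $H^*e$ picks out the column versus the (conjugated) row is accurate.
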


\section{Independence of group structure} \label{independence}
By imposing a group structure on the set $G = \cart{G}{1}{k}$, we define the
irreducible characters $\chi_g$.  We want to show that the numbers
  \[ \sum_{\wt(g) = j} |\chi_g(D)|^2, \quad j = 1, \ldots, k, \]
appearing in (\ref{genwordlength}) are independent of the group structure
chosen.  This sum is somewhat unwieldy, and so we will break it into smaller
sums over elements $g$ which are not only of weight $j$ but also differ from
the identity in exactly the same components.

To begin with, we fix an order of the elements in each set $G_i$, with the
understanding that \emph{whenever we impose a group structure, the first
element will be the identity of the group}.  We may denote by $1_i$ the chosen
element of $G_i$.

Now, for each $J \subset \set{k}$ with $|J|=j$, let
  \begin{equation} \label{SI-eq} S_J = \{g = (g_1, \cdots, g_k) \in G:
   g_i \neq 1_i \;\mbox{iff}\; i \in J \}.
  \end{equation}
(Here $J$ is merely an index set and has no relation to the $J$-characteristics
mentioned earlier.) Clearly the sets $S_J$ are disjoint and their union is the
set of elements of $G$ of weight $j$. Then
  \[ \sum_{\wt(g) = j} |\chi_g(D)|^2 = \sum_{|J| = j} \; \sum_{g \in
  S_J} |\chi_g(D)|^2. \]
We will show that for each $J$ the inner sum
  \begin{equation} \label{innersum-eq} \sum_{g \in
  S_J} |\chi_g(D)|^2 \end{equation}
is independent of the group structure chosen.

To do this, we will write these sums as squared norms of elements in an
appropriate subspace $V_J$ of $\C^s$. Assuming a fixed ordering of the elements
of $G$, the components of a vector $v \in \C^s$ are complex numbers indexed by
the elements of $G$, something like
  \begin{equation} \label{vg-vector} v = [\cdots, v_g, \cdots]'. \end{equation}
The standard basis elements are of form
  \begin{equation} \label{unit-vector} e_g = [0, \ldots, 0, 1, 0 \ldots, 0]', \end{equation}
where 1 occurs in just the $g$-th coordinate.  Then
  \begin{equation} \label{v=sum-eq} v = \sum_{g \in G} v_g e_g. \end{equation}
Let
  \[ V_J = \{v = [\cdots, v_g, \cdots]' \in \C^s: v_g = 0 \; \mbox{if} \; g \notin S_J\}. \]
It is clear that $\dim V_J = |S_J| = \prod_{i \in J} (s_i - 1)$, and that the
sum (\ref{innersum-eq}) is $\norm{M_J(\chi)}^2$ where $M_J$ is the orthogonal
projection of $\C^s$ onto $V_J$.  Now the next result follows immediately from
(\ref{chi=HO-eq}) and the fact that $H = \sqrt{s} \, U$.

\begin{proposition}  \label{MJUO-prop}
Suppose that $G = \cart{G}{1}{k}$ where $G_i$ is an abelian group.  Then the
sum in (\ref{innersum-eq}) is equal to
  \begin{equation} \label{normMJ} s\norm{M_JUO}^2, \end{equation}
where $M_J$ is the orthogonal projection of $\C^s$ on the subspace $V_J$, $U$
is the normalized character table of $G$, and $O$ is the vector of
multiplicities of the design $D$.
\end{proposition}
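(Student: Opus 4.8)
The plan is simply to assemble ingredients that are already in place. First, I would recall from the paragraph preceding the proposition that the inner sum (\ref{innersum-eq}) equals $\norm{M_J\chi}^2$: since $M_J$ is the orthogonal projection of $\C^s$ onto the coordinate subspace $V_J$, the vector $M_J\chi$ has $g$-th coordinate $\chi_g(D)$ when $g \in S_J$ and $0$ otherwise, so its squared norm is exactly $\sum_{g \in S_J}|\chi_g(D)|^2$. This identification of $M_J\chi$ warrants a one-line check, but it is immediate from the definitions of $V_J$ and $S_J$.

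Next I would use (\ref{chi=HO-eq}), which records that $\chi = HO$, where $H$ is the character table of $G$ (the matrix whose $(g,h)$ entry is $\chi_g(h)$) and $O$ is the vector of multiplicities; this is nothing but (\ref{chi_D-eq}) written in matrix form. Since the normalized character table is $U = (1/\sqrt{s})H$, equivalently $H = \sqrt{s}\,U$, we obtain $\chi = \sqrt{s}\,UO$. Applying the (linear) projection $M_J$ gives $M_J\chi = \sqrt{s}\,M_JUO$, and taking squared norms yields $\norm{M_J\chi}^2 = s\,\norm{M_JUO}^2$. Combining this with the first step gives $\sum_{g \in S_J}|\chi_g(D)|^2 = s\,\norm{M_JUO}^2$, which is (\ref{normMJ}), completing the proof.

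I do not expect any genuine obstacle here: each of the three ingredients — the identification of the inner sum with a projected squared norm, the factorization $\chi = HO$, and the normalization $H = \sqrt{s}\,U$ — has already been established, so the proposition follows by a single substitution. The only point deserving a little care is index bookkeeping: $H$, $U$, $\chi$, and $O$ must all be indexed by the elements of $G$ in one and the same fixed order, and the subspace $V_J$ (hence $M_J$) must be expressed relative to that order; this ordering was fixed at the start of Section~\ref{independence}, so no ambiguity remains.
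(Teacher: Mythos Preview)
Your proposal is correct and is essentially identical to the paper's own argument: the paper notes that the inner sum equals $\norm{M_J\chi}^2$ and then says the proposition ``follows immediately from (\ref{chi=HO-eq}) and the fact that $H = \sqrt{s}\,U$,'' which is exactly the substitution you carry out. Your added remarks on coordinate bookkeeping are fine but not strictly needed, since that ordering was fixed earlier in the section.
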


Our goal is now to show that the quantity (\ref{normMJ}) is independent of the
group structure of $G$.

A very useful way to describe $V_J$ is as follows.  Associate to $G_i$ the
Euclidean space $\C^{s_i}$, where the components of a vector $v$ are indexed by
the elements of $G_i$. Let $e^{(i)}_g$ be the unit vector in $\C^{s_i}$ having
a 1 in the $g$th place and zeros elsewhere, so that $e^{(i)}_1 = [1, 0, \cdots,
0]'$. Define the subspaces $V_i \subset \C^{s_i}$, $i = 1,\ldots, k$, by
setting\[ \begin{array}{rll}
V_i &= ({e^{(i) \, }_1})^{\perp}, & i \in J, \\
    &= \vspan(e^{(i)}_1), & i \notin J,
\end{array} \]
where orthocomplement ($^\perp$) and span are within $\C^{s_i}$.  Thus the
vectors of $V_i$ have a zero in the first position if $i \in J$ and zeros in
all the other positions if $i \notin J$.

Let $P_i$ be the projection of $\C^{s_i}$ onto $\vspan(e^{(i)}_1)$, $I_i$ the
identity matrix, and $Q_i = I_i - P_i$.

\begin{proposition} \label{Vi-prop}  With the above definitions, we have
  \begin{equation} \label{VJ-eq} V_J = V_1 \otimes \cdots \otimes V_k. \end{equation}
The orthogonal projection $M_J$ of $\C^s$ on $V_J$ is given by
  \begin{equation} \label{MJ-eq} M_J = M_1 \otimes \cdots \otimes M_k, \end{equation}
where $M_i$ is the orthogonal projection of $C^{s_i}$ on $V_i$.  We have
  \[\begin{array}{rll}
M_i &= Q_i, & i \in J, \\
    &= P_i, & i \notin J.
\end{array} \]
\end{proposition}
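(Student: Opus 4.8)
The plan is to establish the tensor product decomposition (\ref{VJ-eq}) first, and then deduce the projection formula (\ref{MJ-eq}) from it using standard facts about projections onto tensor products of subspaces, which I would expect to find in the appendix on multilinear algebra. The starting observation is that $\C^s \cong \C^{s_1} \otimes \cdots \otimes \C^{s_k}$ via the identification of the standard basis vector $e_g = e_{(g_1,\ldots,g_k)}$ with the decomposable tensor $e^{(1)}_{g_1} \otimes \cdots \otimes e^{(k)}_{g_k}$; this is the natural isomorphism that is compatible with the indexing of coordinates of $\C^s$ by elements $g = (g_1,\ldots,g_k)$ of $G$.

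For the first assertion, I would argue that under this identification, $V_1 \otimes \cdots \otimes V_k$ is spanned by the decomposable tensors $v^{(1)} \otimes \cdots \otimes v^{(k)}$ with $v^{(i)} \in V_i$, and in fact is spanned by those of the form $e^{(1)}_{g_1} \otimes \cdots \otimes e^{(k)}_{g_k}$ where for each $i \in J$ we have $g_i \neq 1_i$ (since $V_i = (e^{(i)}_1)^\perp$ is spanned by $\{e^{(i)}_g : g \neq 1_i\}$) and for each $i \notin J$ we have $g_i = 1_i$ (since $V_i = \vspan(e^{(i)}_1)$). But that set of basis tensors corresponds exactly to $\{e_g : g \in S_J\}$ by the definition (\ref{SI-eq}) of $S_J$, and these span $V_J$. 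Hence $V_J = V_1 \otimes \cdots \otimes V_k$. A dimension count confirms this: $\dim(V_1 \otimes \cdots \otimes V_k) = \prod_i \dim V_i = \prod_{i \in J}(s_i - 1) \cdot \prod_{i \notin J} 1 = \prod_{i\in J}(s_i-1) = |S_J| = \dim V_J$.

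For the second assertion, I would invoke the multilinear-algebra fact that if $M_i$ is the orthogonal projection of $\C^{s_i}$ onto a subspace $V_i$ (with respect to the standard inner products, whose tensor product is the inner product on $\C^s$), then $M_1 \otimes \cdots \otimes M_k$ is the orthogonal projection of $\C^{s_1}\otimes\cdots\otimes\C^{s_k}$ onto $V_1 \otimes \cdots \otimes V_k$. This follows because a tensor product of self-adjoint operators is self-adjoint, a tensor product of idempotents is idempotent, and the range of $M_1 \otimes \cdots \otimes M_k$ is $V_1 \otimes \cdots \otimes V_k$; together these three properties characterize the orthogonal projection. Combined with the first part, this gives $M_J = M_1 \otimes \cdots \otimes M_k$. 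The identification of $M_i$ itself is immediate from the definitions: for $i \notin J$, $V_i = \vspan(e^{(i)}_1)$ and the orthogonal projection onto this line is $P_i$ by definition; for $i \in J$, $V_i = (e^{(i)}_1)^\perp$, whose orthogonal projection is $I_i - P_i = Q_i$.

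I do not anticipate a serious obstacle here; the content is entirely formal once one has the correct bookkeeping for the basis identification $e_g \leftrightarrow e^{(1)}_{g_1}\otimes\cdots\otimes e^{(k)}_{g_k}$ and the appendix result on projections onto tensor products of subspaces. The only point requiring a little care is to be explicit that the inner product (\ref{innerprod-eq}) on $\C^s$ is compatible, under this identification, with the tensor product of the standard Hermitian inner products on the factors $\C^{s_i}$ — this is what legitimizes the use of orthogonal projections factor by factor, and hence the reduction of $\norm{M_J(\chi)}^2$ to a product over $i$ that is about to be exploited in the sequel.
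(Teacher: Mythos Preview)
Your proposal is correct and follows essentially the same approach as the paper: establish (\ref{VJ-eq}) by matching the basis tensors $e^{(1)}_{g_1}\otimes\cdots\otimes e^{(k)}_{g_k}$ with $g\in S_J$ against the spanning set of $V_J$ (the paper phrases this as an inclusion plus dimension count), and then read off (\ref{MJ-eq}). Your justification of the projection formula via the self-adjoint/idempotent/range characterization is more explicit than the paper's ``follows immediately,'' but the content is the same.
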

\begin{proof} The vectors in this tensor product are sums of vectors of the
form
  \[ v_1 \otimes \cdots \otimes v_k, \quad v_i \in V_i. \]
It is not hard to see that a vector of this form has zeros in exactly the
positions indexed by $g \notin S_J$, so that $V_1 \otimes \cdots \otimes V_k
\subset V_J.$  However,
  \[ \begin{array}{rll} \dim V_i &= s_i - 1,  & i \in J, \\ &= 1 &
  \mbox{otherwise,} \end{array} \]
so $\dim V_1 \otimes \cdots \otimes V_k = \dim V_1 \cdots \dim V_k = \prod_{i
\in J} (s_i-1) = \dim V_J$.  Thus (\ref{VJ-eq}) holds, and (\ref{MJ-eq}) follows immediately.  The formula for $M_i$ is obvious.
\end{proof}

We also note the following, which is implicit in equation (\ref{char_prod-eq}).
\begin{proposition} \label{char-table-product-prop}
If $G_i$ is a finite group having character table $H_i$ and normalized table
$U_i$, then $G = \cart{G}{1}{k}$ has character table $H = \ocart{H}{1}{k}$ and
normalized character table $U = \ocart{U}{1}{k}$.
\end{proposition}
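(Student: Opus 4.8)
The plan is to prove the identity $H = \ocart{H}{1}{k}$ by an entrywise comparison, once the runs of $G$ have been enumerated compatibly with the Kronecker product of matrices, and then to read off the normalized version by multilinearity of $\otimes$ in scalars. Nothing beyond equation (\ref{char_prod-eq}) and Theorem~\ref{IrrG-thm} should be needed.

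First I would fix the bookkeeping. Having chosen an isomorphism $G_i\to\Irr(G_i)$ and an order on $G_i$ with the identity first, the table $H_i$ is the $s_i\times s_i$ matrix whose $(g_i,h_i)$ entry is $\chi_{g_i}(h_i)$. If the elements of $G=\cart{G}{1}{k}$ are listed in the lexicographic (Yates) order induced by the chosen orders on the factors, then by the definition of the Kronecker product of matrices the $\big((g_1,\ldots,g_k),(h_1,\ldots,h_k)\big)$ entry of $\ocart{H}{1}{k}$ is $\prod_{i=1}^{k}(H_i)_{g_i,h_i}$; this is essentially the only point at which any care is required, and I would make the ordering convention explicit here (the relevant statement about matrices of tensor products is among the facts collected in the appendix).

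Next comes the substance. By Theorem~\ref{IrrG-thm} the irreducible characters of an abelian group are exactly its homomorphisms into $\C^*$, and a product $\prod_i\chi_{g_i}$ of such homomorphisms on the factors is again a homomorphism on $G$; since the number of these products equals $\prod_i s_i = |G| = |\Irr(G)|$ and distinct tuples yield distinct characters, the assignment $g=(g_1,\ldots,g_k)\mapsto\chi_g$ of (\ref{char_prod-eq}) is precisely the isomorphism $G\to\Irr(G)$ built from the chosen factor isomorphisms, and in particular enumerates $\Irr(G)$ consistently with the chosen order on $G$. Therefore the $(g,h)$ entry of the character table $H$ of $G$ is
\[ \chi_g(h)=\prod_{i=1}^{k}\chi_{g_i}(h_i)=\prod_{i=1}^{k}(H_i)_{g_i,h_i}, \]
which coincides with the entry of $\ocart{H}{1}{k}$ computed above; hence $H=\ocart{H}{1}{k}$.

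Finally, writing $s=s_1\cdots s_k$ and using that $\otimes$ is multilinear in scalars, $\ocart{U}{1}{k}=\big(s_1^{-1/2}H_1\big)\otimes\cdots\otimes\big(s_k^{-1/2}H_k\big)=s^{-1/2}\big(\ocart{H}{1}{k}\big)=s^{-1/2}H=U$. I expect the main (really the only) obstacle to be the indexing in the first step — matching the enumeration of $G$ to the order on index tuples induced by the Kronecker product — after which the proposition is a direct entrywise comparison.
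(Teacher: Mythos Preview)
Your proposal is correct and follows exactly the route the paper indicates: the paper offers no separate proof but simply states that the result ``is implicit in equation~(\ref{char_prod-eq}),'' and your entrywise verification via that product formula (plus the scalar multilinearity for the normalized version) is precisely what that remark unpacks. If anything, you supply more detail than the paper does.
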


We use this to evaluate the vector $M_JUO$ appearing in (\ref{normMJ}).  As in
(\ref{v=sum-eq}), the vector $O$ of multiplicities may be written
  \[ O = \sum_{g \in G} O(g) e_g. \]
But if $g = (g_1, \ldots, g_k) \in \cart{G}{1}{k}$, then
  \[ e_g = e^{(1)}_{g_1} \otimes \cdots \otimes e^{(k)}_{g_k} \]
where $e^{(i)}_j$ is the unit vector in $\C^{s_i}$ having a 1 in the $j$th
place and zeros elsewhere.  Then
  \[ O = \sum_{g = (g_1, \ldots, g_k)} O(g)\, e^{(1)}_{g_1} \otimes \cdots \otimes e^{(k)}_{g_k}. \]
Thus
  \begin{align} M_JUO&=
  \sum_{g = (g_1, \ldots, g_k)}O(g) \, M_J U(e^{(1)}_{g_1} \otimes \cdots \otimes e^{(k)}_{g_k}) \nonumber \\
  &=  \sum_{g = (g_1, \ldots, g_k)}O(g)\, M_1 U_1(e^{(1)}_{g_1}) \otimes \cdots \otimes M_k U_k(e^{(k)}_{g_k}).
  \label{expansion-eq} \end{align}
To analyze the (squared) norm of this, we need to analyze the terms in such
sums.  This leads to evaluating $M_iU_i$ on the basis elements $e^{(i)}_{g_i}$.
We will just need to do this when $M_i = P_i$.

\begin{lemma} \label{PUeg-lemma} For each $i$ and for every $g \in G_i$ we have
  \[ P_iU_i(e^{(i)}_{g}) = \frac{1}{\sqrt{s_i}} \, e^{(i)}_{1}
  . \]
\end{lemma}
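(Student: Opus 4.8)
The plan is to see that this lemma merely repackages the fact that the trivial character is constant. Recall from Section~\ref{J-char-sec} that $U_i = (1/\sqrt{s_i})H_i$, where $H_i$ is the character table of $G_i$, and that under the convention fixed there (the first listed element of $G_i$ is its identity) the first row of $H_i$ is the trivial character of $G_i$ and hence consists entirely of $1$'s. In particular the first entry of every column of $H_i$ equals $1$. Since $U_i e^{(i)}_g$ is $(1/\sqrt{s_i})$ times the column of $H_i$ indexed by $g$, its first coordinate equals $1/\sqrt{s_i}$ for every $g \in G_i$.

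Next I would apply $P_i$. Because $P_i$ is the orthogonal projection onto $\vspan(e^{(i)}_1)$ and $e^{(i)}_1$ is a unit vector, $P_i v = \inner{v,\,e^{(i)}_1}\,e^{(i)}_1$; that is, $P_i$ copies the first coordinate of $v$ into the first slot and kills everything else. Taking $v = U_i e^{(i)}_g$ and using the previous paragraph gives $P_i U_i e^{(i)}_g = (1/\sqrt{s_i})\,e^{(i)}_1$, which is the assertion.

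A slightly slicker route, which avoids mentioning the character table at all, is to push $U_i$ across the inner product: $P_i U_i e^{(i)}_g = \inner{U_i e^{(i)}_g,\,e^{(i)}_1}\,e^{(i)}_1 = \inner{e^{(i)}_g,\,U_i^* e^{(i)}_1}\,e^{(i)}_1$. Applying Lemma~\ref{unitary-lemma} to $G_i$ gives $U_i^* e^{(i)}_1 = (1/\sqrt{s_i})[1,\ldots,1]'$, so the inner product on the right is just the $g$-th coordinate (conjugated) of that vector, namely $1/\sqrt{s_i}$, independently of $g$. Either way the proof is only two or three lines.

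I do not anticipate a genuine obstacle here; the only thing to watch is bookkeeping. One must make sure that ``first coordinate'' and ``identity element'' really are aligned --- which is exactly what the ordering convention adopted at the start of Section~\ref{independence} (and restated in Section~\ref{J-char-sec}) guarantees --- and, when invoking Lemma~\ref{unitary-lemma}, one should keep track of which argument of the sesquilinear inner product~(\ref{innerprod-eq}) is conjugate-linear. Since all the numbers in sight are real (equal to $1$ or $1/\sqrt{s_i}$), the conjugation does nothing, but it is worth noting explicitly so that the argument is airtight.
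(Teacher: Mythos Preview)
Your proposal is correct, and your ``slicker route'' is exactly the paper's proof: it writes $P_iU_i e^{(i)}_g = \inner{U_i e^{(i)}_g, e^{(i)}_1}\,e^{(i)}_1 = \inner{e^{(i)}_g, U_i^* e^{(i)}_1}\,e^{(i)}_1$ and then invokes Lemma~\ref{unitary-lemma} to replace $U_i^* e^{(i)}_1$ by $b$. Your first route (reading off the first coordinate of the $g$th column of $H_i$ directly) is a harmless rephrasing of the same fact and works equally well.
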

\begin{proof}  For simplicity, suppress the index $i$.  Now for any $w \in \C^s$ we have
  \[  Pw = \inner{w,e_1}e_1, \]
so from Lemma \ref{unitary-lemma} we have
  \[  PUv = \inner{Uv,e_1}e_1 = \inner{v, U^*e_1}e_1 = \inner{v,b}e_1 \]
for any $v$, with $b = (1/\sqrt{s})[1, \ldots, 1]'$.  In particular,
  \[  PU(e_g) = \inner{e_g,b}e_1 = \frac{1}{\sqrt{s}} \, e_1
  , \] as claimed.  \end{proof}

We now evaluate the squared norm of sums of form (\ref{expansion-eq}). This
will rest on the following calculation.
\begin{lemma} \label{q=0-lemma} Let $I_j$ be identity matrix of order $s_j$, and let
\[ c_{g_1\cdots g_i} = \sum_{g_{i+1}, \ldots, g_k} O(g),  \]  the sum of the
numbers $O(g)$ over those $g \in G$ with the first $i$ values fixed at $(g_1,
\ldots, g_i)$.   Then for every $0 \le i \le k$,
  \begin{equation} \label{normcomp-eq}
  \norm{( \ocart{I}{1}{i} \otimes \ocart{P}{i+1}{k})UO}^2 =
  \frac{1}{s_{i+1}\cdots s_k} \sum_{g_1, \ldots, g_i} c_{g_1\cdots g_i}^2.
  \end{equation}
In particular, this quantity is independent of the group structure on $G$.
\end{lemma}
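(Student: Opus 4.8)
The plan is to prove \eqref{normcomp-eq} by downward induction on $i$, starting from $i = k$ (where the statement is essentially trivial) and peeling off one factor $P_j$ at a time. For the base case $i = k$ the operator is $I_1 \otimes \cdots \otimes I_k = I$, so the left side is $\norm{UO}^2 = \norm{O}^2$ (since $U$ is unitary by Lemma \ref{unitary-lemma} and Proposition \ref{char-table-product-prop}), and $\norm{O}^2 = \sum_{g} O(g)^2 = \sum_{g_1,\ldots,g_k} c_{g_1\cdots g_k}^2$ because $c_{g_1\cdots g_k} = O(g)$; the empty product $s_{k+1}\cdots s_k$ is $1$, so both sides agree.

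For the inductive step, I would assume \eqref{normcomp-eq} holds for some index $i+1$ (with $0 \le i \le k-1$) and deduce it for $i$. The key observation is that the operator for index $i$ differs from the one for index $i+1$ only in the $(i+1)$st tensor slot, where $I_{i+1}$ is replaced by $P_{i+1}$. Expanding $UO$ as in \eqref{expansion-eq}, applying $I_1 \otimes\cdots\otimes I_i \otimes P_{i+1}\otimes\cdots\otimes P_k$ termwise, and using Lemma \ref{PUeg-lemma} to evaluate each $P_j U_j(e^{(j)}_{g_j}) = \frac{1}{\sqrt{s_j}} e^{(j)}_1$ for $j \ge i+1$, one sees that the image vector is $\frac{1}{\sqrt{s_{i+1}\cdots s_k}}$ times a sum of tensors whose first $i$ slots carry the vectors $U_j(e^{(j)}_{g_j})$ and whose last $k-i$ slots are all fixed at $e^{(j)}_1$. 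Collecting terms, this equals
\[
\Bigl(\ocart{I}{1}{i} \otimes \ocart{P}{i+1}{k}\Bigr)UO
= \frac{1}{\sqrt{s_{i+1}\cdots s_k}} \; \Bigl(\sum_{g_1,\ldots,g_i} c_{g_1\cdots g_i}\, U_1(e^{(1)}_{g_1}) \otimes \cdots \otimes U_i(e^{(i)}_{g_i})\Bigr) \otimes e^{(i+1)}_1 \otimes \cdots \otimes e^{(k)}_1,
\]
where the coefficient $c_{g_1\cdots g_i}$ arises exactly because summing $O(g)$ over the free indices $g_{i+1},\ldots,g_k$ is the definition of $c_{g_1\cdots g_i}$. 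Since $e^{(j)}_1$ has unit norm and $U_1\otimes\cdots\otimes U_i$ is unitary on $\cart{\C^{s}}{1}{i}$ (again by Proposition \ref{char-table-product-prop} and Lemma \ref{unitary-lemma}), taking squared norms gives $\frac{1}{s_{i+1}\cdots s_k}\sum_{g_1,\ldots,g_i} c_{g_1\cdots g_i}^2$, which is the right side of \eqref{normcomp-eq} for index $i$.

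The main obstacle — really a bookkeeping point rather than a genuine difficulty — is justifying the interchange of the $M_J$-type operator with the sum and correctly tracking how the partial sums over the "collapsed" coordinates $g_{i+1},\ldots,g_k$ produce the coefficients $c_{g_1\cdots g_i}$; this is where one must be careful that the $U_j$ factors in the first $i$ slots are left untouched (no $P_j$ is applied there) while the remaining slots are uniformly sent to $e^{(j)}_1$, so that terms with different tails but the same head $(g_1,\ldots,g_i)$ genuinely add. The final sentence of the lemma is then immediate: the right-hand side of \eqref{normcomp-eq} involves only the multiplicities $O(g)$ of the design (through the $c_{g_1\cdots g_i}$) and the orders $s_j$, neither of which depends on the group structure imposed on $G$. $\qed$
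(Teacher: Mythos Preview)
Your computation is correct and is essentially the paper's own proof: expand $O$ as a sum over $g$ of tensor basis vectors, apply the operator termwise, use Lemma~\ref{PUeg-lemma} on slots $i+1,\ldots,k$, collect over the free indices to produce the coefficients $c_{g_1\cdots g_i}$, and then take the squared norm using orthonormality of the vectors $U_j(e^{(j)}_{g_j})$.

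One remark, though: your ``downward induction on $i$'' is vestigial. Nowhere in your inductive step do you actually invoke the hypothesis for $i+1$; the calculation you write out establishes \eqref{normcomp-eq} for the given $i$ from scratch. So the argument is really a direct proof for each fixed $i$ (which is exactly how the paper presents it), and the induction scaffolding can simply be dropped. A minor notational slip: $U_1\otimes\cdots\otimes U_i$ acts on the tensor product $\C^{s_1}\otimes\cdots\otimes\C^{s_i}$, not the Cartesian product.
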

\begin{proof} We  see that  $(\ocart{I}{1}{i} \otimes
\ocart{P}{i+1}{k})UO$\begin{align*}
 &= \sum_{g = (g_1, \ldots, g_k)} O(g) \; I_1U_1(e^{(1)}_{g_1}) \otimes \cdots \otimes I_iU_i(e^{(i)}_{g_i})
 \otimes P_{i+1}U_{i+1}(e^{(i+1)}_{g_{i+1}}) \otimes \cdots \otimes P_kU_k(e^{(k)}_{g_k}) \\
 &= \sum_{g = (g_1, \ldots, g_k)} O(g) \quad U_1(e^{(1)}_{g_1}) \otimes \cdots \otimes
 U_i(e^{(i)}_{g_i}) \quad
 \otimes \quad \frac{1}{\sqrt{s_{i+1}}}e^{(i+1)}_1 \otimes \cdots \otimes \frac{1}{\sqrt{s_k}}e^{(k)}_1 \\
 &= \frac{1}{\sqrt{s_{i+1}\cdots s_k}} \sum_{(g_1, \ldots, g_i)} \left( \sum_{(g_{i+1}, \ldots, g_k)} O(g) \right)
 U_1(e^{(1)}_{g_1}) \otimes \cdots \otimes U_i(e^{(i)}_{g_i})  \otimes e^{(i+1)}_1 \otimes \cdots \otimes
 e^{(k)}_1 \\
 &= \frac{1}{\sqrt{s_{i+1}\cdots s_k}} \sum_{(g_1, \ldots, g_i)} c_{g_1\cdots g_i} \;
 U_1(e^{(1)}_{g_1}) \otimes \cdots \otimes U_i(e^{(i)}_{g_i})  \otimes e^{(i+1)}_1 \otimes \cdots \otimes
 e^{(k)}_1. \end{align*}
But for each $j$, the set $\{U_j(e^{(j)}_g), \; g \in G_j\}$ is orthonormal in
$\C^{s_j}$ as the unit vectors $e^{(j)}_g, g \in G_j$, are orthonormal and
$U_j$ is an isometry.  Hence the elements $U_1(e^{(1)}_{g_1}) \otimes \cdots
\otimes U_i(e^{(i)}_{g_i}) \otimes e^{(i+1)}_1 \otimes \cdots \otimes
e^{(k)}_1$ are orthonormal in $\C^s$, and so $\norm{( \ocart{I}{1}{i} \otimes
\ocart{P}{i+1}{k})UO}^2$ \begin{align*}  &=
 \frac{1}{s_{i+1}\cdots s_k} \sum_{(g_1, \ldots, g_i)} c_{g_1\cdots g_i}^2 \;
 \norm{U_1(e^{(1)}_{g_1}) \otimes \cdots \otimes U_i(e^{(i)}_{g_i})  \otimes e^{(i+1)}_1 \otimes \cdots \otimes
 e^{(k)}_1}^2.
\intertext{But this}
 &= \frac{1}{s_{i+1}\cdots s_k} \sum_{(g_1, \ldots, g_i)} c_{g_1\cdots g_i}^2 \;
 \norm{U_1(e^{(1)}_{g_1})}^2 \cdots \norm{U_i(e^{(i)}_{g_i})}^2 \; \norm{ e^{(i+1)}_1}^2 \cdots \norm{e^{(k)}_1}^2 \\
 &= \frac{1}{s_{i+1}\cdots s_k} \sum_{(g_1, \ldots, g_i)} c_{g_1\cdots g_i}^2
\end{align*}
as all the norms in the next-to-last line are 1.  This is formula
(\ref{normcomp-eq}).
\end{proof}

Now fix $J \subset \set{k}$.

\begin{proposition} \label{indep-prop} $\norm{M_JUO}^2$ is independent of
the group structure of $G$.
\end{proposition}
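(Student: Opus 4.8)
The plan is to expand $M_J$ by inclusion--exclusion and reduce $\norm{M_JUO}^2$ to an alternating sum of quantities already handled by Lemma~\ref{q=0-lemma}. By Proposition~\ref{Vi-prop}, $M_J = M_1\otimes\cdots\otimes M_k$ with $M_i=Q_i=I_i-P_i$ for $i\in J$ and $M_i=P_i$ for $i\notin J$. Substituting $I_i-P_i$ in the slots $i\in J$ and distributing the tensor product over these $|J|$ differences gives
\[ M_J=\sum_{S\subseteq J}(-1)^{|S|}R_S, \]
where $R_S$ is the tensor product of operators having $P_i$ in every slot $i\in S\cup(\set{k}\setminus J)$ and $I_i$ in the remaining slots $i\in J\setminus S$. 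Each $R_S$ is a tensor product of orthogonal projections of the spaces $\C^{s_i}$, hence is itself an orthogonal projection of $\C^{s}$.

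Next I would use the fact that $M_J$ is an orthogonal projection to linearize the norm: $\norm{M_JUO}^2=\inner{M_J^*M_JUO,UO}=\inner{M_JUO,UO}=\sum_{S\subseteq J}(-1)^{|S|}\inner{R_SUO,UO}$. For each $S$, since $R_S$ is a self-adjoint idempotent, $\inner{R_SUO,UO}=\inner{R_S^2UO,UO}=\inner{R_SUO,R_SUO}=\norm{R_SUO}^2$. Thus $\norm{M_JUO}^2=\sum_{S\subseteq J}(-1)^{|S|}\norm{R_SUO}^2$, so it suffices to show that each $\norm{R_SUO}^2$ is independent of the group structure.

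This is precisely the content of Lemma~\ref{q=0-lemma}. Permuting the tensor factors so that the $I$-slots come first is just a permutation of the coordinates of $\C^{s}$, hence an isometry, and it does not affect the group structure on any $G_i$; under it $R_S$ takes the form $\ocart{I}{1}{m}\otimes\ocart{P}{m+1}{k}$ with $m=|J\setminus S|$. Lemma~\ref{q=0-lemma} then evaluates $\norm{R_SUO}^2$ as an explicit quantity assembled solely from the multiplicities $O(g)$ and the orders $s_i$. Hence $\norm{M_JUO}^2$ is a fixed function of $O$ and of $s_1,\dots,s_k$, independent of the abelian group structures chosen on the $G_i$.

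I do not expect a genuine obstacle; the only points needing a little care are the bookkeeping in the inclusion--exclusion expansion of $M_J$ and the observation that reordering tensor factors is harmless, since Lemma~\ref{q=0-lemma} is stated only for the sorted arrangement $\ocart{I}{1}{m}\otimes\ocart{P}{m+1}{k}$. As an aside, the expansion can be bypassed altogether: writing $\norm{M_JUO}^2=\inner{O,\,U^*M_JU\,O}$ and using $U=\ocart{U}{1}{k}$ gives $U^*M_JU=\bigotimes_{i}U_i^*M_iU_i$, and the computation in the proof of Lemma~\ref{PUeg-lemma} shows that $U_i^*P_iU_i$ is the orthogonal projection of $\C^{s_i}$ onto its line of constant vectors, while $U_i^*Q_iU_i=I_i-U_i^*P_iU_i$; since these operators depend only on $s_i$, so does $U^*M_JU$, whence the result.
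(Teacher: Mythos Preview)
Your main argument is correct and is essentially the paper's approach: both expand each $Q_i=I_i-P_i$ so as to reduce to the $Q$-free case handled by Lemma~\ref{q=0-lemma}. The paper carries this out by induction on the number of $Q$-factors, peeling one off at a time and invoking the Pythagorean theorem (writing $T_1=M+T_2$ with $M\perp T_2$), whereas you unroll the expansion in one stroke and linearize via the projection identity $\|M_JUO\|^2=\langle M_JUO,UO\rangle$; these are equivalent maneuvers.

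Your aside, on the other hand, is a genuinely different and shorter route. Observing that $U_i^*P_iU_i$ is the orthogonal projection of $\C^{s_i}$ onto the line of constant vectors (this is exactly the content of Lemma~\ref{unitary-lemma}, since $U_i^*e_1^{(i)}=b_i$) shows directly that $U^*M_JU=\bigotimes_i U_i^*M_iU_i$ depends only on the orders $s_i$ and on $J$, not on the group structures. This bypasses Lemma~\ref{q=0-lemma} and the inclusion--exclusion entirely, and in fact gives a closed-form description of the operator whose quadratic form in $O$ yields $\|M_JUO\|^2$.
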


\begin{proof}  Actually, we will prove something more general, namely that the
proposition holds for projections $M$ made up of a tensor product of $P_i$'s,
$Q_i$'s and $I_i$'s, where $Q_i = I_i - P_i$.  Letting $q=$ the number of
factors $Q_i$ in the projection, we prove this by induction on $q$.

We simplify matters by proving our result for projections of form
 \begin{equation} \label{QIP-eq} M = \ocart{Q}{1}{q} \otimes \ocart{I}{q+1}{q+i} \otimes
 \ocart{P}{q+i+1}{k}.\end{equation}
The proof is the same for projections with other ordering of the tensor
factors.

The base case ($q=0$) is precisely Lemma \ref{q=0-lemma}.  For the induction
step, assume that the result holds for projections having $q-1$ factors $Q$
(not necessarily the first $q-1$ factors).  Now $Q_q = I_q - P_q$, so the
projection (\ref{QIP-eq}) is
  \begin{eqnarray*} \lefteqn{M = \ocart{Q}{1}{q-1} \otimes I_q \otimes \ocart{I}{q+1}{q+i} \otimes
 \ocart{P}{q+i+1}{k}} \\ & & - \ocart{Q}{1}{q-1} \otimes P_q \otimes \ocart{I}{q+1}{q+i} \otimes
 \ocart{P}{q+i+1}{k} \\ & & = T_1 - T_2, \end{eqnarray*}
say.  Since $T_1 = M + T_2$ and $M$ and $T_2$ are orthogonal, the Pythagorean
Theorem gives
  \begin{equation} \label{T1-T2-eq}
  \norm{MUO}^2 = \norm{T_1UO}^2 - \norm{T_2UO}^2. \end{equation}
But since $T_1$ and $T_2$ contain $q-1$ factors $Q_i$, the induction hypothesis applies to both terms on the right-hand-side of (\ref{T1-T2-eq}), and therefore to the left-hand-side, as desired.
\end{proof}

By Proposition \ref{MJUO-prop} this shows that the sum (\ref{innersum-eq}), and
therefore the quantities $A_j(D)$, are independent of the group structure of
$G$.  Theorem~\ref{indep-thm} is now proved.

\section{Conclusion} \label{conclusion}
The definition of the generalized wordlength pattern (GWLP) given in
\cite{XuWu01} makes sense if one chooses abelian rather than
cyclic groups to index the levels of each factor.  The choice to use cyclic groups
in \cite{XuWu01} is arbitrary, and we have shown that while it does affect the
so-called $J$-characteristics of a design, it does not affect the GWLP.  This
removes a possible ambiguity in the definition of the GWLP, and therefore in
the use of minimum aberration as an optimality criteria for nonregular designs.
The choice of cyclic groups may be useful computationally as the irreducible
characters are then especially simple.

A special case of the invariance with respect to group structure is already
implicit in the coding literature \cite{Delsarte73}.  (The connection with
regular designs is given in \cite{XuWu01}.) However, this covers designs in
which (a) the index sets $G_i$ are the same (the alphabet) and (b) the design
is actually a subset of $G$ (so that the counting function $O$ is simply an
indicator function). Our Theorem~\ref{indep-thm} is quite general, and makes no
use of concepts borrowed from coding theory.

The wordlength pattern of a regular design does not determine the design, and
in particular does not tell us its alias structure.  For that, one needs the
defining words.  We have seen that an analog of the set of defining words of a
nonregular design is the set of $J$-characteristics, at least in respect of
determining the design. However, as we noted in Section~\ref{J-char-sec}, the
$J$-characteristics vary with the choice of group structure assigned to
factors.  Certainly the aliasing structure of a design does not depend on this
arbitrary choice.  The GWLP is independent of this choice, and one may
therefore ask just what statistical information it carries.  This is a question
worthy of further investigation.

\vspace{1cm} \textbf{Acknowledgment}.  We thank Dan Lutter for some useful
discussions, and the referee and editor for some helpful suggestions.

\bibliography{paper}
\bibliographystyle{plain}

\appendix
\section{Multilinear background}
In this section we briefly review some results on tensor products that we have
used in this paper.  We only deal with Euclidean spaces (specifically $\C^k$)
since that is all we need here.  For simplicity we concentrate on the bilinear
case (two tensor factors).

There are many expositions of multilinear algebra, such as that in
\cite{BroidaWilliamson}. An interesting exposition with some statistical
applications is given in \cite{Takemura83}.

As is well-known, the \emph{Kronecker} or \emph{tensor product} of the matrices
$A$ ($m \times n$) and $B$ is
  \[ A \otimes B = \left[
                     \begin{array}{ccc}
                       a_{11}B & \cdots & a_{1n}B \\
                       \vdots & \ddots & \vdots \\
                       a_{m1}B & \cdots & a_{mn}B \\
                     \end{array}
                   \right]. \]
For vectors $v \in \C^a$ and $w \in \C^b$  we thus have
  \[ v \otimes w = \left[
                     \begin{array}{c}
                       v_1 w \\
                       \vdots \\
                       v_a w \\
                     \end{array}
                   \right] \in \C^{ab}, \]
where $v = [v_1, \ldots, v_a]'$.  This product satisfies the usual bilinear
properties, for example, $cv \otimes w = c(v \otimes w)= v \otimes cw$ ($c$ a
scalar) and $A \otimes (B + C) = A \otimes B + A \otimes C$.

If $V \subset \C^a$ and $W \subset \C^b$ are subspaces, then we define their
tensor product to be the subspace of $\C^{ab}$ given by
  \[  V \otimes W = \vspan\{v \otimes w: \;  v \in V, w \in W\}. \]
(Technically, $V \otimes W$ is constructed as a free vector space modulo
bilinear relations, and is only isomorphic to a subspace of $\C^{ab}$, but
we will identify it with that subspace.)  If $\{e_1,
\cdots, e_k\}$ is a basis of $V$ and $\{f_1, \cdots, f_{\ell} \}$ is a basis of
$W$, then
  \[ \{ e_i \otimes f_j: \; i = 1, \ldots k, \; j = 1, \ldots \ell \} \]
is a basis of $V \otimes W$.  Thus in particular
  \[ \dim(V \otimes W) = \dim(V)\cdot \dim(W). \]

If we use $\inner{v_1,v_2}$ to denote the inner (or dot) product and $\norm{v}
= \sqrt{\inner{v,v}}$ the norm, then we have
  \begin{align*} \inner{v_1\otimes w_1, \, v_2 \otimes w_2} &= \inner{v_1,v_2}
  \inner{w_1,w_2},
\intertext{and in particular}
  \norm{v \otimes w} &= \norm{v}\norm{w}. \end{align*}
The norm and inner product of vectors of the form $\sum v_i \otimes w_i$ are
calculated by expanding the inner product in the usual way.

If $T_i$ is a linear transformation on $V_i$, then $T = T_1 \otimes T_2$ is a
linear transformation on $V_1 \otimes V_2$ such that
  \[ T(v_1 \otimes v_2) = T_1(v_1) \otimes  T_2(v_2). \]
$T$ is evaluated on sums of such terms by linearity. The matrix of $T$ is given
by the Kronecker product of the matrices $T_i$. Finally, if $S = S_1 \otimes
S_2$ is a linear transformation such that $S_iT_i$ is defined for each $i$,
then
  \[ ST = S_1T_1  \otimes S_2T_2. \]

All of the preceding extends in the obvious way to more than two tensor
factors.

\end{document}